\documentclass[10pt,conference]{IEEEtran}

\usepackage[dvips]{color}
\usepackage{epsf}

\usepackage{amsmath}
\usepackage{amsxtra}
\usepackage{graphicx}
\usepackage{epsfig}
\usepackage{latexsym}
\usepackage{amsfonts}
\usepackage{rawfonts}
\usepackage[latin1]{inputenc}
\usepackage[T1]{fontenc}
\usepackage{calc}
\usepackage{url}
\usepackage{enumerate}
\usepackage{color}
\usepackage{amssymb}
\usepackage{upref}
\usepackage{epic,eepic}
\usepackage{times}
\usepackage{cite}
\usepackage{dsfont}
\usepackage{mathrsfs}
\usepackage{verbatim}

\DeclareMathOperator*{\argh}{argmax}

\DeclareMathOperator*{\minim}{minimize}











\newtheorem{theorem}{\bf Theorem}

\newtheorem{definition}{\bf Definition}

\newcommand{\qed}{\nobreak \ifvmode \relax \else
  \ifdim\lastskip<1.5em \hskip-\lastskip
  \hskip1.5em plus0em minus0.5em \fi \nobreak
  \vrule height0.75em width0.5em depth0.25em\fi}


\newcounter{step}
\newlength{\totlinewidth}
  {\end{list}%
  \rule{\linewidth}{1pt}}
\newcounter{substep}

  {\end{list}}

\newlength{\aligntop}
\setlength{\aligntop}{-0.53em}
\newlength{\alignbot}
\setlength{\alignbot}{-0.85\baselineskip}
\addtolength{\alignbot}{-0.1em} \makeatletter
\renewenvironment{align}{%
  \vspace{\aligntop}
  \start@align\@ne\st@rredfalse\m@ne
}{%
  \math@cr \black@\totwidth@
  \egroup
  \ifingather@
    \restorealignstate@
    \egroup
    \nonumber
    \ifnum0=`{\fi\iffalse}\fi
  \else
    $$%
  \fi
  \ignorespacesafterend%
  \vspace{\alignbot}\par\noindent
} \makeatother

\IEEEoverridecommandlockouts


\begin{document}
\title{\huge Matching Theory for Backhaul Management in Small Cell Networks with mmWave Capabilities}\vspace{-1em}

\author{
\authorblockN{Omid Semiari$^\dag$, Walid Saad$^\dag$, Zaher Dawy$^\ddag$, Mehdi Bennis$^\ast$}\vspace*{0.5em}
\authorblockA{\small $^\dag$Wireless@VT, Bradley Department of Electrical and Computer Engineering, Virginia Tech, Blacksburg, VA, USA, \\ Email: \protect\url{{osemiari, walids}@vt.edu}\\
\small $^\ddag$  Department of Electrical and Computer Engineering, American University of Beirut, Beirut, Lebanon, Email: \url{zd03@aub.edu.lb}\\
\small $^\ast$Centre for Wireless Communications-CWC, University of Oulu, Finland, Email: \url{bennis@ee.oulu.fi}
  }
%
\maketitle

\begin{abstract}
Designing cost-effective and scalable backhaul solutions is one of the main challenges for emerging wireless small cell networks (SCNs). In this regard, millimeter wave (mmW) communication technologies have recently emerged  as an attractive solution to realize the vision of a high-speed and reliable wireless small cell backhaul network (SCBN). In this paper, a novel approach is proposed for managing the spectral resources of a heterogeneous SCBN that can exploit simultaneously mmW and conventional frequency bands via carrier aggregation. In particular, a new SCBN model is proposed in which small cell base stations (SCBSs) equipped with broadband fiber backhaul allocate their frequency resources to SCBSs with wireless backhaul, by using aggregated bands. One unique feature of the studied model is that it jointly accounts for both wireless channel characteristics and economic factors during resource allocation. The problem is then formulated as a one-to-many matching game and a distributed algorithm is proposed to find a stable outcome of the game. The convergence of the algorithm is proven and the properties of the resulting matching are studied. Simulation results show that under the constraints of wireless backhauling, the proposed approach achieves substantial performance gains, reaching up to $30 \%$ compared to a conventional best-effort approach.
\end{abstract}

\section{Introduction}\label{intro}
The rapid proliferation of smartphones, tablets, and bandwidth-intensive wireless services has drastically strained the resources of existing cellular networks \cite{17}. To cope with this increasing traffic demand, small cell networks (SCNs) have been introduced as a promising approach to boost the coverage and capacity of current cellular networks \cite{Quek13}. These anticipated performance gains will be due to a cell size reduction and a dense deployment of small cell base stations (SCBSs). Nonetheless, this paradigm shift in cellular networks will lead to new challenges, specifically for the backhaul network design and management \cite{Quek13}. In fact, due to unplanned and dense deployment of small cells, it may not be economically viable for mobile network operators (MNOs) to deploy broadband fiber backhaul for every SCBS \cite{Hur13}. Therefore, new cost-effective backhaul solutions are required that provide sufficient bandwidth to support the backhaul traffic of the densely deployed SCNs.

One promising solution for backhauling in SCNs is via the use of \textit{millimeter wave (mmW) communications} \cite{Rangan14,Boccardi01,Hur13,Ghosh14}. The mmW spectral band ranging between 30-300 GHz will allow the deployment of highly directional antennas, as well as the possibility of transmission over a large available bandwidth, up to $1$ GHz. Such characteristics of mmW would help SCBSs to communicate over broadband backhaul links, while causing negligible interference to other, wireless backhaul links in the SCN. However, mmW signals attenuate rapidly with distance and cannot easily penetrate obstructing objects. Consequently, mmW frequencies are usually used for line of sight (LOS) point-to-point communication. Due to these limitations, mmW communication has to coexist with the conventional sub-6 GHz band which is attractive for non line of sight (NLOS) transmissions \cite{Rangan14}. Therefore, aggregating mmW band with sub-6 GHz band at the backhaul network provides an appealing and robust solution for small cell backhaul networks (SCBNs), especially, in dense urban areas. The concurrent transmissions at different frequency bands is achievable by using \emph{carrier aggregation }(CA) technique, introduced in LTE-Advanced standard \cite{Bhat12}. Another practical challenge for supporting an SCBN with mmW is the need for multi-MNO support \cite{Rangan14}. In fact, an SCBS with NLOS backhaul links may need the relaying support of neighboring SCBSs that belong to a second MNO. Therefore, multi-MNO operation will significantly increase the flexibility for mmW utilization.

In this regard, different wireless backhaul solutions have been proposed so far for SCNs. The authors in \cite{Liebl01} propose a fair resource allocation for the out-band relay backhaul links, enabled with CA. The proposed approach in \cite{Liebl01} aims to maximize the throughput fairness among backhaul and access links in LTE-Advanced relay system. In \cite{Yi01}, a backhaul resource allocation approach is proposed for LTE-Advanced in-band relaying. This approach optimizes resource partitioning between relays and macro users, taking into account both backhaul and access links quality. In \cite{Loumiotis01}, a dynamic backhaul resource allocation approach is proposed using evolutionary game theoretic approach. Instead of static backhaul resource allocation, the authors take into account the dynamics of users' traffic demand and allocate sufficient resources to the base stations, accordingly. Although interesting, the body of work in \cite{Liebl01,Boccardi01,Yi01,Loumiotis01} does not consider mmW communication at the SCBN and is primarily focused on modeling rather than resource allocation. In addition, it does not account for the effect of backhaul cost in modeling backhaul networks and solely focuses on physical layer metrics, such as links' capacity and fairness.

In wireless networks, the backhaul cost covers a substantial portion of the total cost of ownership (TCO) for MNOs \cite{Hur13}. In fact, it is economically inefficient for an individual MNO to afford the entire TCO of an independent backhaul network \cite{Hur13}. Therefore, MNOs may need to share their backhaul network resources with other MNOs that demand backhaul support for their SCBSs. Hence, beyond the technical challenges of backhaul management in SCNs, one must also account for the cost of sharing backhaul resources between MNOs.  The work in \cite{Sengupta01} proposes an economic framework to lease the frequency resources to different MNOs by using novel pricing mechanisms. In \cite{Mahloo01}, the authors propose a cost evaluation model for SCBNs. This work highlights the fact that integrating heterogenous backhaul technologies is mandatory to achieve a satisfactory performance in an SCBN. Moreover, they show that the TCO of an SCN is much higher than conventional cellular networks. Therefore, it is more critical to consider backhauling cost in SCBN design. The provided solutions in \cite{Sengupta01,Mahloo01} concentrate only on the economic aspects of the backhaul network, while a suitable SCBN model must integrate the cost constraints with the physical constraints of the wireless network, such as traffic demand.

The main contribution of this paper is to propose a novel resource management approach for CA-enabled small cell backhaul networks. In particular, we consider an SCBN in which MNOs seek to minimize their overall cost for using the backhaul, while maintaining the required data rates at the backhaul network. We propose a novel approach that allocates the backhaul frequency resource blocks to demanding SCBSs with no fiber backhaul support. We formulate the problem as a matching game and study its properties. To solve this game, we propose an algorithm that converges to a two-sided stable solution and guarantees the promised quality of service (QoS) at the SCBN. In addition, we compare our results with two other schemes: heuristic best-effort and random resource allocation. Our results show that by exploiting inexpensive and available mmW band, the proposed approach can achieve substantial gains compared to other resource management schemes. To our best knowledge, this paper is \emph{the first to develop a resource allocation solution that jointly accounts for the economics and wireless properties} of an SCBN that is endowed with both mmW and carrier aggregation capabilities.

The rest of this paper is organized as follows. Section ~\ref{model} describes the system model and formulates the problem. Section \ref{matching} presents our distributed approach and the proposed algorithm to solve the problem. Section \ref{simulation} provides the simulation results and Section \ref{conclusion} concludes the paper.

\section{System Model}\label{model}
Consider a small cell backhaul network composed of $K$ SCBSs from the set $\mathcal{K}$. Moreover, let $\mathcal{K}_1\subset \mathcal{K}$ denote the set of SCBSs that are connected to the core network via fiber links, which we refer to hereinafter as anchored BSs (A-BSs). The A-BSs are owned by a certain network operator $q_1$. The rest of the SCBSs, called demanding SCBSs (D-BSs) within the set $\mathcal{K}_2\subset \mathcal{K}$, are not equipped with broadband fiber backhaul. Instead, to serve their access link users at downlink, the D-BSs need the assistance of the A-BSs to receive the traffic from the core network. We assume that D-BSs belong to a second network operator $q_2$ and have to pay the A-BSs for receiving backhaul service. The system model is shown in Fig. \ref{fig1}. At the backhaul, each SCBS $k \in \mathcal{K} $ is enabled with CA technique and can operate in two different bands, simultaneously. The set of aggregated bands, known as component carriers (CCs), include millimeter wave (mmW) and sub-6 GHz bands, denoted by $\mathcal{N}_1$ and $\mathcal{N}_2$, respectively. The mmW band expands from $30$ GHz to $300$ GHz. Due to different propagation characteristics, the frequency bands in $\mathcal{N}_1$ are suitable for line of sight (LOS) point-to-point (P2P) links, while the sub-6 GHz band in $\mathcal{N}_2$ are more likely to be used for non-LOS (NLOS) communication. In addition, the backhaul signals are OFDM modulated so as to cope with the channel frequency selective fading and achieve high data rates. Here, $\mathcal{N}_1$ and $\mathcal{N}_2$ are composed, respectively, of $N_1$ and $N_2$ backhaul resource blocks (BRBs) and the total aggregated BRBs is $N=N_1+N_2$.\vspace{-.1cm}
\begin{figure}
  \centering
  \centerline{\includegraphics[width=8cm]{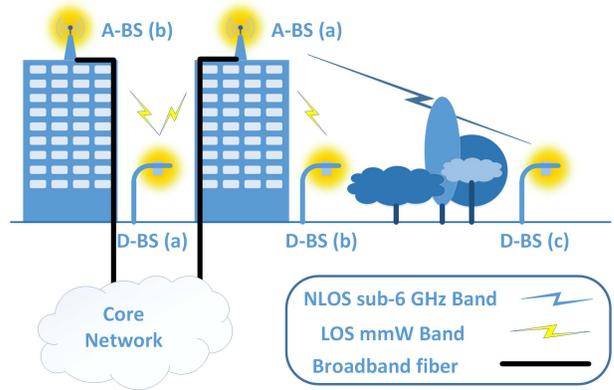}}
  \caption{A small cell backhaul network with carrier aggregation.}\vspace{-2em}
\label{fig1}
\end{figure}\vspace{0em}
\subsection{Propagation Model for mmW Band and Sub-6 GHz Band}\label{prop}
The main components of the propagation loss within the mmW frequency band include pathloss, atmospheric gaseous losses, and the rain attenuation loss.
For mmW propagation, we use the model proposed in \cite{Ghosh14} for the cellular urban scenarios:
\begin{align}\label{eq1aa}
L_{\text{mmW},\text{LOS}}=\beta+\alpha 10\log_{10}(d)+\chi.
\end{align}
In fact, (\ref{eq1aa}) is known to be the best linear fit to the propagation measurement in mmW frequency band\cite{Ghosh14}, where $\alpha$ is the slope of the fit and $\beta$, the intercept parameter, is the pathloss (in dB scale) for $1$ meter of distance. In addition, $\chi$ models the deviation in fitting (in dB scale) which is a Gaussian random variable with zero mean and variance $\xi^2$. Extensive field measurement have shown that large bandwidth mmW networks in urban settings are tend to be noise limited \cite{Rangan14,Ghosh14}. This is due to the fact that mmW band allows the implementation of dense antenna based network architecture with high directional gains.

However, this is not true for sub-6 GHz band, as the antenna beams are much wider. Hence, the propagation in this band is interference limited. Assuming no power control, the signal-to-interference-plus-noise ratio (SINR) at the receiving SCBS in the sub-6 GHz band is given by:
\begin{align}\label{eq6}
\gamma_{k_1nk_2}=\frac{\psi_{k_1}|h_{k_1nk_2}|^2}{\sum_{j \in \mathcal{K}_1\backslash k_1} \psi_{j}|h_{jnk_2}|^2+\sigma^2},
\end{align}
where $h_{k_1nk_2}$ represents the channel state of the link between A-BSs $k_1$ and D-BSs $k_2$ over BRB $n \in \mathcal{N}_2$. In addition, $\psi_{k}$ and $\sigma^2$ denote, respectively, the transmit power of A-BS $k$ and the variance of the receiver's Gaussian noise.

\subsection{Problem Formulation}\label{problem}
We consider the SCBN during downlink transmissions at the radio side. In the downlink, the A-BSs forward the traffic from the core network to the D-BSs by using the available CCs. For example, in Fig. \ref{fig1}, D-BS (b) can have a backhaul link, either through LOS mmW with A-BS (a) or NLOS sub-6 GHz link with A-BS (b). However, D-BS (c) may only have NLOS backhaul link with A-BS (a). Since the A-BSs and D-BSs belong to two different MNOs, network operator $q_2$ has to make a payment to network operator $q_1$ for allowing its A-BSs to relay the backhaul traffic of $q_2$'s network. For tractability, we assume that $q_2$ pays $q_1$ for each BRB that A-BSs allocate to D-BSs for relaying the backhaul traffic.

Compared to the sub-6 GHz band, mmW links are more vulnerable to the signal blockage and are less congested. Therefore, MNO $q_1$ must intuitively provide cheaper service over mmW band than sub-6 GHz band. Let $\mathcal{P}$ be a set of pairs $p_{k_1}=(p_{k_1,1},p_{k_1,2}), \forall k_1 \in \mathcal{K}_1$ , where $p_{k_1,i}, i \in \{1,2\}$ denotes the price that $q_1$ will charge $q_2$ to allocate a BRB of A-BS $k_1$ in the $\mathcal{N}_i$ band to a D-BS. In this model, the goal of MNO $q_2$ is to minimize the cost of its SCBN by exploiting CA capabilities, while ensuring the required backhaul QoS.
To simplify the notation, we use $\mathcal{N}=\mathcal{N}_1 \cup \mathcal{N}_2$, to represent the set of all BRBs in both CCs.

Given the BRB price set by MNO $q_1$, we need to find the optimal backhaul resource allocation, such that every D-BS $k_2$ can meet its required data rate $R_{k_2}^d$. We assume that every D-BS $k_2$ is associated with a maximum budget $B_{k_2}$ that allows $k_2$ to use BRBs from A-BSs. Therefore, MNO $q_2$ has to solve the following optimization problem:
\begin{IEEEeqnarray}{rCl}\label{eq1}
&&\minim_{\eta^*} \, C_{\text{total}}=\!\!\!\sum_{i \in \{0,1\}}\sum_{n \in \mathcal{N}_i}\sum_{k_1 \in \mathcal{K}_1}\sum_{k_2\in \mathcal{K}_2 } p_{k_1,i}\eta_{k_1nk_2}\!\!\IEEEyessubnumber\label{1a}\\
&&\textrm{s.t.},\,\, R_{k_2}(\eta,\gamma_{k_1nk_2})\geq R_{k_2}^d,\,\,\,\,\, \forall k_2 \in \mathcal{K}_2,\IEEEyessubnumber\label{1b}\\
&&\sum_{i \in \{0,1\}}\sum_{n \in \mathcal{N}_i}\sum_{k_1 \in \mathcal{K}_1}p_{k_1,i}\eta_{k_1nk_2}\leqslant B_{k_2},\, \forall k_2 \in \mathcal{K}_2,\IEEEyessubnumber\label{1c}\\
&&\sum_{n \in \mathcal{N}}\sum_{k_2\in \mathcal{K}_2 }\eta_{k_1nk_2}\leq N, \forall k_1 \in \mathcal{K}_1,\IEEEyessubnumber\label{1d}\\
&&\sum_{k_2 \in \mathcal{K}_2}\eta_{k_1nk_2}\leq 1, \forall n \in \mathcal{N}, \forall k_1 \in \mathcal{K}_1,\IEEEyessubnumber\label{1e}\\
&&\eta_{k_1nk_2} \in \{0,1\},\IEEEyessubnumber\label{1f}
\end{IEEEeqnarray}
where the backhaul resource allocation variable is such that $\eta_{k_1nk_2}=1$, if A-BS $k_1$ allocates BRB $n \in \mathcal{N}$ to D-BS $k_2$, and $\eta_{k_1nk_2}=0$, otherwise. The total achievable backhaul rate for SCBS $k_2$ can be written as:
\begin{align}\label{eq:new}
R_{k_2}(\eta,\gamma_{k_1nk_2})=\!\!\!\!\sum_{k_1\in \mathcal{K}_1}\sum_{n\in \mathcal{N}}\omega_n\log_2\left(1+\gamma_{k_1nk_2}\right)\eta_{k_1nk_2},
\end{align}
Here, $\gamma_{k_1nk_2}$ determines signal-to-noise ratio, if $n \in \mathcal{N}_1$, otherwise, $\gamma_{k_1nk_2}$ is the SINR obtained from (\ref{eq6}). Moreover, $\omega_n$ denotes the bandwidth of BRB $n$ which is different for each CC. The SINR/SNR distribution is a function of cells' location, transmit power, backhaul resource allocation and the propagation model of the carrier.


With this in mind, we can translate the optimization problem in (\ref{1a})-(\ref{1f}) into the assignment of D-BSs $k_2 \in \mathcal{K}_2$ to the {\cal N} BRBs of A-BSs $k_1 \in \mathcal{K}_1$, in both carriers. During this assignment, the network has to minimize the cost of backhaul service for D-BSs, while taking into account the rate demands and budget constraints. This problem is a 0-1 integer programming, which is a satisfiability problem as such one of Karp's 21 NP-complete problems. Therefore, it is difficult to solve (\ref{1a})-(\ref{1f}) via classical optimization approaches \cite{Karp72}.

To address this problem, we propose a distributed approach that can provide a suitable solution for the proposed backhaul resource allocation problem. Given the BRB price set by MNO $q_1$, our goal is to develop a \textit{self-organizing, decentralized} approach, that can guarantee the QoS of the D-BSs, while taking into account their limited budget.

\section{Matching Theory for Backhaul Management}\label{matching}
To solve the proposed SCBN resource management problem, we propose a novel approach based on matching theory. Matching theory is a promising approach to realize distributed resource management in wireless networks \cite{Roth92,eduard11}. The merit of matching theory is that it allows to provide a decentralized solution with tractable complexity for combinatorial allocation problems, as such that of (\ref{1a})-(\ref{1f}).

In essence, a matching game is a two-sided assignment problem between two disjoint sets of players in which the players of each set are interested to be matched to the players of the other set, according to some \textit{preference relations}. A preference relation $\succ$ is defined as a complete, reflexive, and transitive binary relation between the elements of a given set. Here, we denote $\succ_m$ as the preference relation of player $m$ and denote $a\succ_m b$, if player $m$ prefers $a$ more than $b$.

In the proposed backhaul resource management problem, the preference relations would depend on both the rate and the cost of the backhaul link. Matching theory allows to specify preference relations, by defining individual utility functions for A-BSs and D-BSs, which simplifies accounting for the cost constraints within the resource allocation problem in (\ref{1a})-(\ref{1f}).

\subsection{Backhaul Management as a Matching Game}
For the studied backhaul management problem, the matching game will occur between the D-BSs in $\mathcal{K}_2$ and the BRBs in the set $\mathcal{N}$. The BRBs are controlled by their corresponding A-BS throughout the matching process. We introduce a two-sided \emph{one-to-many matching game} in which each BRB will be matched to maximum one D-BS, while a D-BS can be assigned to one or more BRBs. Formally, we can define the matching game as follows:
\begin{definition}\label{def1}
Given the two disjoint finite sets of players $\mathcal{K}_2$ and $\mathcal{N}$, a \textit{matching game} $\mu$ is defined as a function from $\mathcal{K}_2  \rightarrow \mathcal{K}_1 \times \mathcal{N}$ from which we have
\begin{itemize}
\item[1.] $\forall n \in \mathcal{N}, \mu(n) \in \mathcal{K}_2$,
\item[2.] $\forall k_2 \in \mathcal{K}_2, \mu(k_2) \subseteq \mathcal{K}_1 \times\mathcal{N}$,
\item[3.] $\mu(n)=k_2$, if and only if  $n \in \mu(k_2)$.
\end{itemize}
\end{definition}
In matching theory, the \emph{quota} of a player is defined as a maximum number of players, that a player can be matched to. Here, the quota of BRBs is set to one, while there is no pre-determined quota for D-BSs. In addition, we have $(k_2,n) \in \mu$, if BRB $n$ is assigned to D-BS $k_2$ through matching $\mu$ and $(k_2,n) \notin \mu$, otherwise.

In this game, given the BRB prices, each D-BS in $\mathcal{K}_2$ aims to reduce the cost of backhauling, while meeting its required data rate demand. In other words, for each D-BS $k_2 \in \mathcal{K}_2$,
\begin{IEEEeqnarray}{rCl}\label{5}
&\minim_{\eta} \, C_{k_2}=\sum_{i \in \{0,1\}}\sum_{n \in \mathcal{N}_i}\sum_{k_1 \in \mathcal{K}_1}p_{k_1,i}\eta_{k_1nk_2}, \IEEEyessubnumber\label{5a}\\
&\textrm{s.t.},\,R_{k_2}(\eta,\gamma)\geq R_{k_2}^d,\,\IEEEyessubnumber\label{5b}\\
&C_{k_2} \leqslant B_{k_2}.\,\IEEEyessubnumber\label{5c}
\end{IEEEeqnarray}
In fact, (\ref{5a})-(\ref{5c}) is the distributed version of the main optimization problem in (\ref{eq1}). The one-to-many matching game will automatically satisfy the allocation constraints in (\ref{1d})-(\ref{1f}).

The problem in (\ref{5}) implies that the preference relations of D-BS $k_2$, $\succ_{k_2}$, depends on both the channel state of each backhaul links and the price of BRBs. Therefore, we must introduce suitable utility functions for each player set, using which the players can define their preference relations. Let $V_{k_2,n}(.)$ and $U_{n,k_2}(.)$ denote, respectively, the utility functions of players in $\mathcal{K}_2$ and $\mathcal{N}$. Hence, D-BS $k_2$ prefers BRB $n_1$ to $n_2$, $n_1\succ_{k_2} n_2$, if and only if $V_{k_2,n}(n_1)>V_{k_2,n}(n_2)$.

The SCBS $k_2 \in \mathcal{K}_2$ will yield a utility $V_{k_2,n}$ to BRB $n \in \mathcal{N}$ of A-BS $k_1 \in \mathcal{K}_1$ based on the following utility function:
\begin{align}\label{eq2}
&V_{k_2,n}(\gamma_{k_2nk_1},p_{k_1})= \omega_n\log(1+\gamma_{k_2nk_1})-\zeta p_{k_1,i},
\end{align}
where $i=1$, if $n$ is a BRB of the mmW band, otherwise $i=2$. In (\ref{eq2}), $\zeta$ is a weighting parameter that controls the tradeoff between the cost and the link's achievable rate. By using this utility, the D-BSs will seek to be matched to the cheapest BRBs through the links with higher achievable rates.

Furthermore, the A-BSs aim to assign each of their BRBs to a particular D-BS in $\mathcal{K}_2$, from which the link's achievable rate is maximized. Therefore, the utility that BRB $n$ belonging to A-BS $k_1$ assigns to SCBS $k_2$, $U_{n,k_2}$, is given by
\begin{align}\label{eq3}
&U_{n,k_2}(\gamma_{k_2nk_1},R_{D,k_2})= \omega_n\log(1+\gamma_{k_2nk_1}).
\end{align}

Having formulated the backhaul management problem as a matching game, our next step is to provide a solution that allows the D-BSs to take advantage of CA capabilities and satisfy their rate demands.
\subsection{Proposed Backhaul Management Algorithm and Solution}
To solve the proposed matching game, one suitable concept is the so-called two-sided \textit{stable matching} between the D-BSs and BRBs, defined as follows \cite{Roth92}:
\begin{definition}
A pair $(k_2,n) \notin \mu$ is said to be a \textit{blocking pair} of the matching $\mu$, if and only if $n \succ_{k_2} \mu(k_2)$ and $k_2 \succ_n \mu(n)$.
Matching $\mu^*$ is stable, if there is no blocking pair.
\end{definition}

Under a stable matching $\mu^*$, one can ensure that A-BSs will not reallocate their BRBs to other D-BSs. Hence, it ensures the promised QoS to the D-BSs, or more specifically, to the MNO $q_2$.

The classical approach used to solve a matching game is by adopting the so-called \textit{deferred acceptance} algorithm which is known to converge to a stable matching, given fixed preference relations and fixed quotas \cite{Roth92,eduard11}. In our problem, however, we cannot fix the quotas for D-BSs, since it is neither spectral nor cost efficient to allocate more BRBs than a D-BS requires to manage its backhaul traffic.

Therefore, the approach of \cite{Roth92} and \cite{eduard11} are inapplicable and a new algorithm must be developed to find a stable matching. To this end, we propose the algorithm in Table \ref{tab:algo}. Prior to the start of the assignment algorithm, the D-BSs obtain the channel state from the control signals sent by the A-BSs, along with the price values $p_k$. Then, each D-BS $k_2$ initializes the set $\mathcal{S}_{k_2}=\mathcal{K}_1 \times \mathcal{N}$ as the list of all BRBs that D-BS $k_2$ may apply to them.
\begin{table}[!t]
  \centering
  \caption{
    \vspace*{-0em}Proposed Backhaul Management Algorithm}\vspace*{-0.9em}
    \begin{tabular}{p{\columnwidth}}
      \hline \vspace*{-0em}
      \textbf{Inputs:}\,\,$\mathcal{K}_1,\mathcal{K}_2, \mathcal{N}_1, \mathcal{N}_2, \mathcal{P}$\\
\hspace*{1em}\textit{Initialize:}   \vspace*{0em}
D-BSs in $\mathcal{K}_2$ find the BRB prices and determine the channel states, throughout receiving control signal from A-BSs in $\mathcal{K}_1$. Initialize the set $\mathcal{S}_{k_2}$ to the total available BRBs of the A-BSs.

\hspace*{0em} \textit{Stage 1:}\begin{itemize}\vspace*{-0em}
\item[] \hspace*{0em}(a) D-BSs rank all the BRBs in $\mathcal{K}_1 \times \mathcal{N}$, using (\ref{eq2}).
\item[] \hspace*{0em}(b) Every D-BS $k_2 \in \mathcal{K}_2$ applies for its most preferred BRB $n^*$ from $\mathcal{S}_{k_2}$ and removes $n^*$ from $\mathcal{S}_{k_2}$.
\end{itemize}\vspace*{-0cm}

\hspace*{0em} \textit{Stage 2:}\begin{itemize}\vspace*{-0em}
\item[] \hspace*{0em}(a) BRBs $n \in \mathcal{K}_1 \times \mathcal{N}$ calculate utility of each applicant using (\ref{eq3}).
\item[] \hspace*{0em}(b) Among all applicants and its current match $\mu(n)$, BRB $n$ accepts the most preferred D-BS and rejects the rest.
\item[] \hspace*{0em}(c) If $k_2$ is accepted by BRB $n$ belonging to $i$-th CC of $k_1$, set $R_{k_2}=R_{k_2}+w\log_2(1+\gamma_{k_1nk_2})$ and $C_{k_2}=C_{k_2}+p_{k_1,i}$. In addition, if $k_2$ is rejected by a BRB $n \in \mu(k_2)$, set $R_{k_2}=R_{k_2}-w\log_2(1+\gamma_{k_1nk_2})$ and $C_{k_2}=C_{k_2}-p_{k_1,i}$.
\end{itemize}\vspace*{-0cm}
\hspace*{0em}\textit{\textbf{repeat} Stage 1 and Stage 2 \textbf{until} for every $k_2 \in \mathcal{K}_2$, either $R_{k_2}>R_{k_2}^d$ , or $\mathcal{S}_{k_2}=\emptyset$, or $C_{k_2}>B_{k_2}$.}\vspace*{0em}\\

\hspace*{0em}\textit{Stage 3:}\begin{itemize}\vspace*{-0em}
\item[]   \hspace*{0em}The actual data traffic will be forwarded from A-BSs to the SCBSs $k_2 \in \mathcal{K}_2$ over BRBs $n \in \mu(k_2)$.
\end{itemize}
\hspace*{0em}\textbf{Output:}\,\,Stable matching $\mu^*$\vspace*{0em}\\
   \hline
    \end{tabular}\label{tab:algo}\vspace{-0.5cm}
\end{table}

In Stage 1, the D-BSs rank the BRBs, based on the given prices and the achievable rate of the links, as per (\ref{eq2}). Furthermore, each D-BS $k_2$ applies for its most preferred BRB, $n^*=\argh_{n}V_{k_2,n}(\gamma_{k_2nk_1}, p_{k_1})$, from the set $\mathcal{S}_{k}$, then removes $n^*$ from $\mathcal{S}_{k}$, in order to avoid applying for the same BRB multiple times.

In Stage 2, BRBs receive the applications and calculate the corresponding utility values, using (\ref{eq3}). Each BRB accepts the most preferred D-BS, $k_2^*=\argh_{k_2} U_{n,k_2}(\gamma_{k_2nk_1})$, and rejects previous $\mu(n)$ as well as the rest of the applicants. If accepted by BRB $n$, the SCBS $k$ adds the achievable rate from that link to $R_{k_2}$. In addition, it subtract the cost of BRB $n$ from its total cost $C_{k_2}$. In case D-BS $k_2$ has been accepted by BRB $n$ and now is rejected, $k_2$ will subtract the achievable rate of this link from $R_{k_2}$ and adds the price of $n$ to its budget.

The players iterate based on Stage $1$ and Stage $2$, until we reach a stage in which, for every D-BS, either $R_{k_2}>R_{k_2}^d$, or $\mathcal{S}_{k_2}=\emptyset$, or $C_{k_2}>B_{k_2}$. Following Stage 2, in Stage 3, the actual data traffic will be forwarded from A-BSs to the SCBSs $k_2 \in \mathcal{K}_2$ over BRBs $n \in \mu(k_2)$.

\begin{theorem}
The proposed algorithm in Table \ref{tab:algo} is guaranteed to converge to a two-sided stable matching between D-BSs and BRBs.
\end{theorem}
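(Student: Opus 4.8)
The plan is to establish the theorem in two independent parts: first the \emph{termination} of the iterative procedure, and then the \emph{stability} of the matching it outputs. Both follow the logic of a deferred-acceptance argument, suitably adapted to the fact that the D-BSs carry no fixed quota but instead keep applying until one of the three stopping conditions at the end of Stage 2 is triggered.

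For termination, the key monotonicity observation is that in Stage 1(b) every applying D-BS $k_2$ selects its most preferred BRB from $\mathcal{S}_{k_2}$ and then permanently deletes it, so $\mathcal{S}_{k_2}$ strictly shrinks with every application and a D-BS never applies to the same BRB twice. Since $\mathcal{S}_{k_2}$ is initialized to the finite set $\mathcal{K}_1\times\mathcal{N}$, the total number of applications over all D-BSs is bounded by $|\mathcal{K}_2|\,|\mathcal{K}_1|\,|\mathcal{N}|$. Each round of Stage 1--Stage 2 that has not yet met the halting criterion consumes at least one application, so after finitely many rounds every D-BS $k_2$ must satisfy $R_{k_2}>R_{k_2}^d$, or $\mathcal{S}_{k_2}=\emptyset$, or $C_{k_2}>B_{k_2}$, which is exactly the termination condition. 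This yields convergence.

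For stability I would first record two structural lemmas. First, each D-BS issues its applications in strictly decreasing order of the preference $\succ_{k_2}$ induced by $V_{k_2,n}$ in (\ref{eq2}); consequently, whenever a BRB $n'$ ends up in $\mu(k_2)$, every BRB $n$ with $n\succ_{k_2}n'$ must already have been applied to by $k_2$, because such $n$ was removed from $\mathcal{S}_{k_2}$ before $n'$ could ever be reached. Second, on the BRB side, Stage 2(b) keeps the most preferred D-BS (per $U_{n,k_2}$ in (\ref{eq3})) among the current holder and all new applicants, so the partner held by any BRB is non-decreasing with respect to $\succ_n$ across rounds; in particular, once $n$ rejects a D-BS $k_2$, its eventual partner $\mu(n)$ satisfies $\mu(n)\succeq_n k_2$. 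With these in hand, the blocking-pair argument is standard. Suppose toward a contradiction that $(k_2,n)\notin\mu^*$ is a blocking pair, so $n\succ_{k_2}\mu^*(k_2)$ and $k_2\succ_n\mu^*(n)$. The first condition means there is an assigned BRB $n'\in\mu^*(k_2)$ with $n\succ_{k_2}n'$, whence by the first lemma $k_2$ applied to $n$ at some earlier round; at that round $n$ either rejected $k_2$ outright or accepted and later rejected it, in both cases in favor of a D-BS it prefers, so the second lemma gives $\mu^*(n)\succeq_n k_2$. Since $(k_2,n)\notin\mu^*$ forces $\mu^*(n)\neq k_2$, we get $\mu^*(n)\succ_n k_2$, contradicting $k_2\succ_n\mu^*(n)$; hence $\mu^*$ admits no blocking pair.

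I expect the main obstacle to be the interaction between the unfixed quota of the D-BSs and the preference-order bookkeeping: unlike classical deferred acceptance, a D-BS may stop applying as soon as its rate demand $R_{k_2}^d$ is met, so one must verify carefully that a D-BS matched to a less-preferred BRB has in fact already applied to every strictly more-preferred BRB. This is exactly what the first lemma secures through the strict shrinking of $\mathcal{S}_{k_2}$, and getting that bookkeeping right---together with correctly reading $n\succ_{k_2}\mu(k_2)$ as a swap against some element of the assigned set---is the crux of the proof.
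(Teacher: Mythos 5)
Your proposal is correct and follows essentially the same route as the paper: termination because each D-BS's application set $\mathcal{S}_{k_2}$ strictly shrinks within a finite ground set, and stability because any D-BS that prefers an unassigned BRB must already have applied to it and been rejected in favor of a partner that the BRB prefers. The only organizational difference is that the paper runs the stability argument as an explicit case split over the three stopping conditions ($R_{k_2}>R_{k_2}^d$, $\mathcal{S}_{k_2}=\emptyset$, $C_{k_2}>B_{k_2}$), which also explicitly dispatches the possibility that $k_2$ would simply \emph{add} $n$ without dropping anything (no incentive when the rate is met, infeasible when the budget is exhausted, already-rejected when $\mathcal{S}_{k_2}=\emptyset$), whereas you fold the same facts into your two lemmas and read the blocking condition as a swap --- a reading you should justify by the same appeal to the stopping conditions that you only gesture at in your closing paragraph.
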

\begin{proof}
The convergence of the proposed algorithm in Table \ref{tab:algo} is guaranteed, since a D-BS never applies for a certain BRB twice. Hence, in the worst case, all D-BSs will apply for all BRBs once, which results $\mathcal{S}_{k_2}=\emptyset, \forall k_2 \in \mathcal{K}_2$. Next, we show that once the algorithm converges, the resulting backhaul resource allocation between D-BSs and BRBs is two-sided stable. Lets assume that there exists a pair $(k_2,n) \notin \mu$ that blocks the outcome of the algorithm. Since the algorithm has converged, we can conclude that at least one of the following cases is true about $k_2$: $R_{k_2}>R_{k_2}^d$ , or $\mathcal{S}_{k_2}=\emptyset$, or $C_{k_2}>B_{k_2}$.

The first case is $R_{k_2}>R_{k_2}^d$. This means $k_2$ does not need to add more BRBs to $\mu(k_2)$. In addition, $k_2$ would not replace any of $n' \in \mu(k_2)$ with $n$, since $n' \succ_{k_2} n$. Otherwise, $k_2$ would apply earlier for $n$. If $k_2$ has applied for $n$ and got rejected, this means $\mu(n) \succ_n k_2$, which contradicts $(k_2,n)$ to be a blocking pair. Analogous to the first case, $\mathcal{S}_{k_2}=\emptyset$ implies that $k_2$ has got rejected by $n$, which means $\mu(n) \succ_n k_2$ and $(k_2,n)$ cannot be a blocking pair.

Finally, $C_{k_2}>B_{k_2}$ means $k_2$ cannot afford more BRBs. Moreover, $k_2$ would not replace any of $n' \in \mu(k_2)$ with $n$, due to the same reasoning as case 1. This proves the theorem.
\end{proof}\vspace{-.2cm}
\section{Simulation Results and Discussion}\label{simulation}
\begin{table}[!t]
  \centering
  \caption{
    \vspace*{-0em}Simulation parameters}\vspace*{-0em}
\begin{tabular}{|c|c|c|}
\hline
\bf{Notation} & \bf{Parameter} & \bf{Value} \\
\hline
$K$ & Total number of small cells & $10$\\
\hline
$K_1$ & Total number of A-BSs & $2$\\
\hline
$(N_1,N_2)$ & Number of BRBs& $(192,100)$ \\
\hline
$(\omega_1,\omega_2)$ & Bandwidth of BRBs& $(4.86 $MHz$, 480 $KHz$)$ \\
\hline
$\psi$ & Transmit power& $1$ W \\
\hline
$\xi$ & Standard deviation of mmW path loss& $4.1$ \cite{Ghosh14} \\
\hline
$\alpha$ & Path loss exponent& $2$ \cite{Ghosh14}\\
\hline
$\beta$ & Path loss at $1$ m& $70$ dB \\
\hline
$R_k^d$ & Rate demand& $100$ Mbits/s \\
\hline
$p_{k_1}$ & Price of BRBs& $(0.1,10)$ \\
\hline
$\sigma^2$ & Noise power& $-90$ dBm \\
\hline
\end{tabular}\label{tab2}
\end{table}
For our simulations, we consider an SCBN with two A-BSs and eight D-BSs. All SCBSs are distributed randomly within a $2 \,\text{km} \times 2 \, \text{km}$ square area. The center frequency of mmW band and sub-6 GHz band is $73$ GHz and $5.8$ GHz, respectively. The simulation parameters are summarized in Table \ref{tab2}. The wireless mmW links are noise limited, while the channels in sub-6 GHz band experience Rayleigh fading with the path loss exponent set to three. The simulations are averaged over a large number of iterations for different channel fading and SCBS locations. We compare the performance of the proposed approach with a heuristic best-effort algorithm as well as the random allocation, all of which are enabled with CA. The best-effort algorithm allocates BRBs to D-BSs based on the maximum achievable rate, while the random BRB allocation simply allocates BRBs to D-BSs, randomly.

\begin{figure}
  \centering
  \centerline{\includegraphics[width=\columnwidth]{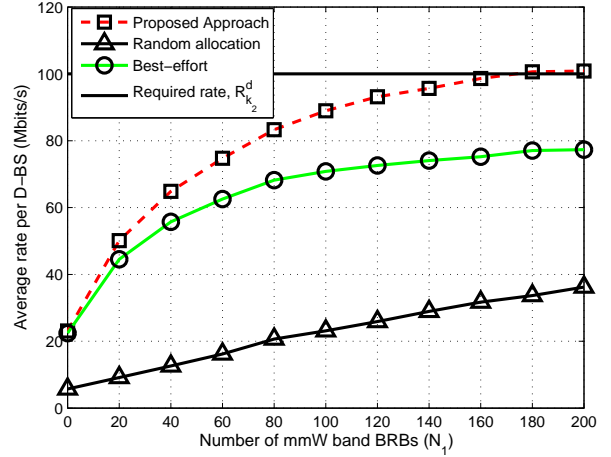}}\vspace{-.1cm}
  \caption{Average rate per D-BS vs the number of mmW BRBs $(N_1)$. The $p_{k_1}=(0.1, 10)$ for both A-BSs, $\zeta=1$ Mbits/s/price unit, and $N_2=100$. The rate demand is $R_D=100 \text{Mbits/s}$ for each D-BS. The budget of each D-BS is limited to $60$.}\vspace{-1em}
\label{2final}
\end{figure}

\begin{figure}
  \centering
  \centerline{\includegraphics[width=\columnwidth]{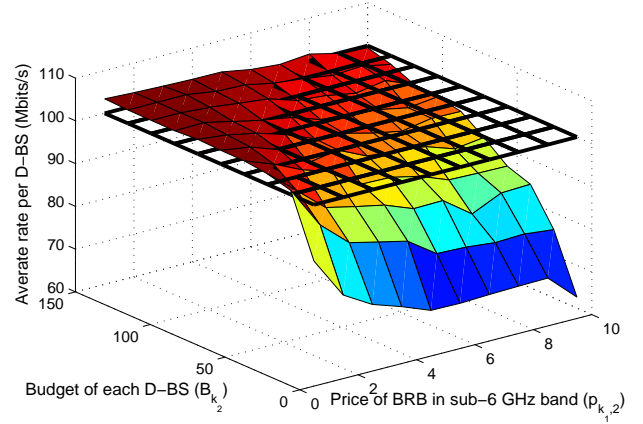}}\vspace{-.2cm}
  \caption{Average rate per D-BS vs different budgets $B_{k_2}$ and BRB prices of sub-6 GHz band, $p_{k,2}$. The weighting parameter $\zeta=0.1 $ Mbits/s/price unit. The constant mesh grid shows the required rate $R_{k}^d=100$ Mbits/s. }\vspace{-2em}
\label{fig3}
\end{figure}

In Fig. \ref{2final}, we show the average rate per D-BS resulting from the proposed algorithm as the number of mmW band BRBs varies. The results are compared with two other algorithms, the best-effort scheme and the random allocation scheme. Here, constant solid line at $100$ Mbits/s indicates the rate requirement of each D-BS and the budget per D-BS is limited to $60$. The achieved rates for all three approaches increase with the increase in mmW bandwidth, since more BRBs will be available for each D-BS. Fig. \ref{2final} shows that the proposed approach achieves substantial performance gains, reaching up to $30 \%$ and $190 \%$, respectively, compared to best-effort approach and random allocation for the network with $N_1=180$. Due to the limited budget, the proposed approach will shift from using the sub 6-GHz band towards exploiting inexpensive and available mmW links, as the number of mmW BRBs increases. In contrast, the greedy best-effort algorithm disregards the limited budget of D-BSs and allocates BRBs only based on the quality of the links. Overall, Fig. \ref{2final} reveals the potentials of aggregating mmW band with existing backhaul technologies, specially when the wireless backhaul cost is a limiting constraint for MNOs.

Fig. \ref{fig3} shows the average rate per D-BS resulting from the proposed algorithm, versus different budget values and BRB prices. The constant mesh grid shows the required rate $R_{k}^d=100$ Mbits/s. In Fig. \ref{fig3}, we can clearly see that for low price values, the required rate is achieved, almost for all budget values. In addition, comparing Fig. \ref{fig3} with Fig. \ref{2final}, we can see that choosing a smaller value for $\zeta$ in Fig. \ref{fig3} has decreased the average rate. That is due to the fact that $\zeta$ controls the effect of BRB prices in the proposed backhaul resource allocation. Therefore, for more limited budgets, D-BSs should pick larger values for $\zeta$, to be able to meet the required rate.

Fig. \ref{fig4} shows the convergence time of the proposed algorithm, as the number of SCBSs grows. This figure compares the number of iterations for two different values of rate demands $R_{k}^d=100$ Mbits/s, and $R_{k}^d=50$ Mbits/s. The number of iterations of the proposed algorithm is expected to increase with the network size, as clearly seen in Fig. ~\ref{fig4}. However, this figure shows that the convergence of the proposed distributed algorithm does not exceed $400$ iterations for a very dense SCBN with $20$ SCBSs. Indeed, the convergence time increases linearly as the network size grows, thus, yielding a reasonable scalability and complexity. In addition, the convergence time for $R_{k}^d=50$ Mbits/s is almost half of the this value when $R_{k}^d=100$ Mbits/s. Thus, Fig. \ref{fig4} shows that the traffic at the radio access network substantially affects the convergence time of the backhaul resource allocation.
\begin{figure}
  \centering
  \centerline{\includegraphics[width=\columnwidth]{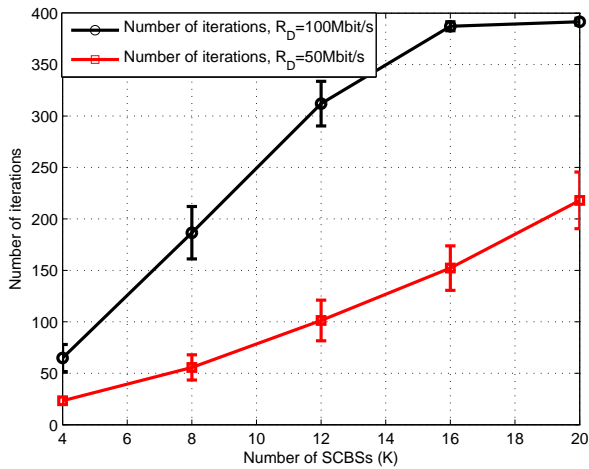}}\vspace{-.25cm}
  \caption{The number of iterations with $95 \%$ confidence vs $K$ with $K_1=2$, and for two different rate demands, $R_k^d=100$ Mbits/s and $R_k^d=50$ Mbits/s. The $p_{k_1}=(0.1,10)$ for both A-BSs. The budget of D-BSs is limited to $60$.}\vspace{-2em}
\label{fig4}
\end{figure}
\section{Conclusion}\label{conclusion}
In this paper, we have proposed a novel distributed backhaul management approach for small cell networks enabled with backhaul carrier aggregation. We have formulated the problem as a one-to-many matching game which provides a mean to solve the proposed backhaul management problem with manageable complexity. To solve the game, we have proposed an algorithm to match D-BSs to the A-BS controlled backhaul resource blocks and proved the two-sided stability of its outcome. Simulation results have shown that the proposed algorithm outperforms the heuristic best-effort and random allocation approaches and achieves higher rates, subject to the budget constraints of the small cells.

The results provide insightful guidelines on designing wireless backhaul networks, in which the network operators may need to combine the monetary aspects with the physical constraints of the network. In addition, the results emphasize on the potential gains of utilizing available mmW band in the backhaul network, along with the congested sub-6 GHz band.

\bibliographystyle{ieeetr}
\bibliography{references}
\end{document}